\pdfoutput=1
\documentclass{article} 
\usepackage{iclr2026_conference,times}

\usepackage{amsmath,amsfonts,bm}

\def\eqref#1{equation~\ref{#1}}

\def\1{\bm{1}}

\DeclareMathAlphabet{\mathsfit}{\encodingdefault}{\sfdefault}{m}{sl}
\SetMathAlphabet{\mathsfit}{bold}{\encodingdefault}{\sfdefault}{bx}{n}

\usepackage[colorlinks=true, linkcolor=red, citecolor=gray, urlcolor=magenta]{hyperref}
\usepackage{xcolor}

\usepackage{url}
\usepackage{graphicx} 
\usepackage{booktabs}
\usepackage{multirow}
\usepackage{amsthm}
\newtheorem{theorem}{Theorem}

\title{Prior-aware and Context-guided Group Sampling for Active Probabilistic Subsampling}

\author{
Beomgu Kang, Hyunseok Seo\thanks{Corresponding author} \\
Department of Artificial Intelligence, Korea University, Seoul, Republic of Korea \\
\texttt{bgkang1@korea.ac.kr, seoh@korea.ac.kr}
}

\iclrfinalcopy 

\begin{document}

\maketitle

\begin{abstract}
Subsampling significantly reduces the number of measurements, thereby streamlining data processing and transfer overhead, and shortening acquisition time across diverse real-world applications. The recently introduced Active Deep Probabilistic Subsampling (A-DPS) approach jointly optimizes both the subsampling pattern and the downstream task model, enabling instance- and subject-specific sampling trajectories and effective adaptation to new data at inference time. However, this approach does not fully leverage valuable dataset priors and relies on top-1 sampling, which can impede the optimization process. Herein, we enhance A-DPS by integrating a deterministic (fixed) prior-informed sampling pattern derived from the training dataset, along with group-based sampling via top-k sampling, to achieve more robust optimization---a method we call Prior-aware and context-guided Group-based Active DPS (PGA-DPS). We also provide a theoretical analysis supporting improved optimization via group sampling, and validate this with empirical results. We evaluated PGA-DPS on three tasks: classification, image reconstruction, and segmentation, using the MNIST, CIFAR-10, fastMRI knee, and hyperspectral AeroRIT datasets, respectively. In every case, PGA-DPS outperformed A-DPS, DPS, and all other sampling methods. Our code is available at \url{https://github.com/B9Kang/PGADPS}.

\end{abstract}

\section{Introduction}
Modern technologies generate massive datasets that can require lengthy acquisition time and hinder real-time onboard processing. Many real-world applications, including Magnetic Resonance Imaging (MRI) \citep{ye2019compressed}, computed tomography (CT) imaging \citep{chen2008prior}, ultrasound imaging \citep{huijben2020learning}, digital micromirror device \citep{baraniuk2007lecture}, seismic surveying \citep{herrmann2012fighting}, and hyperspectral imaging \citep{sun2019hyperspectral}, highlight the importance of reducing imaging data volume. Therefore, strategic sampling not only reduces data volume and preserves essential information for efficient transfer and processing but also accelerates acquisition, a critical factor in medical imaging.

Compressed sensing (CS) was developed as a subsampling strategy to overcome the Nyquist-Shannon limits on the sampling rates required for perfect signal reconstruction \citep{donoho2006compressed,eldar2012compressed}. CS has been widely adopted across various applications and has demonstrated significant impact \citep{lustig2007sparse,baraniuk2007compressive,yu2009compressed,martin2014hyca,lorintiu2015compressed,han2016deep}. Although CS exploits the inherent signal structures, such as sparsity, it does not take into account the information relevant to the downstream task during the sampling process. Bridging the gap between modality- and task-specific knowledge and the sampling process has proven challenging.

Recently, subsampling techniques customized for specific data distributions and downstream tasks have been proposed with advances in deep learning, offering learned yet fixed sampling patterns \citep{huijben2020deep,weiss2020joint,shen2020learning,sherry2020learning,zhang2020extending,aggarwal2020j,mou2021deep,yang2025learning}. These approaches optimize a sampling pattern based on the average data distribution in the training set, which may not provide optimal results for individual instances. To address this limitation, active sampling was introduced, where new points are adaptively selected based on previously acquired samples, iterating until the required target number of samples is obtained \citep{zhang2019reducing,jin2019self,bakker2020experimental,pineda2020active,van2021active,tian2025sampling}. Active sampling produces a sampling trajectory that adapts to each test instance and incorporates newly acquired data dynamically. This is particularly valuable in medical imaging, where each patient's unique physiological condition demands individualized acquisition trajectories.

Although active subsampling provides instance-specific sampling patterns and enables effective adaptation to new data, boosting downstream task performance, its sampling strategy still has room for improvement. Through iterative top-1 sampling guided by previously selected samples, active sampling leverages inter-sample relationships to optimize the sampling model and minimizes redundant information. However, it fails to fully exploit the rich prior knowledge embedded in the training dataset. Moreover, relying on top-1 sampling leads to sub-optimal optimization \citep{huijben2020deep}.

In this work, we enhance Active Deep Probabilistic Subsampling (A-DPS) architecture by incorporating deterministic (fixed) subsampling based on prior knowledge of the training data with group sampling to achieve more robust optimization (Fig. \ref{fig1}). We call this approach Prior-aware and context-guided Group-based Active Deep Probabilistic Subsampling (PGA-DPS). We demonstrate that PGA-DPS exploits training-data priors via deterministic sampling and reinforces this with instance-adaptive group sampling, resulting in a smoother loss landscape and more stable optimization. We evaluated PGA-DPS on the MNIST dataset \citep{lecun1998gradient}, the CIFAR-10 dataset \citep{krizhevsky2009learning}, the fastMRI knee dataset \citep{zbontar2018fastmri}, and the hyperspectral AeroRIT dataset \citep{rangnekar2020aerorit}. Across classification, image reconstruction, and segmentation tasks, PGA-DPS outperforms all other state-of-the-art sampling methods.

\section{Related Work}
Recent works have proposed learning-based subsampling techniques tailored to data types and downstream tasks. In the field of MRI, Learning-based Optimization of the Under-sampling PattErn (LOUPE) was developed to simultaneously optimize the subsampling pattern and reconstruct the image, addressing two core challenges of compressed sensing \citep{bahadir2020deep}. LOUPE learns the sampling mask by relaxing the non-differentiable threshold operation, which is similar to the Gumbel-softmax trick used in Deep Probabilistic Subsampling (DPS) \citep{huijben2020deep}. The target sampling ratio is enforced by an L1 sparsity penalty on sampling mask within the loss function. Additionally, a stochastic greedy algorithm has been applied in MRI to obtain the sampling pattern that minimizes the reconstruction loss \citep{sanchez2020scalable}. However, greedy algorithm-based methods require pre-trained reconstruction capable of handling any subsampling pattern to thoroughly evaluate the loss.

In hyperspectral imaging (HSI), band selection techniques are widely used to identify a small informative subset of spectral bands and reduce spectral redundancy. Self-Representation Learning with Sparse 1D-Operational Autoencoder (SRL-SOA) employs a sparse autoencoder model to learn a sparse representation using a Taylor-series expansion of non-linear transformation \citep{ahishali2022srl}. However, this does not incorporate the downstream task model for band selection. Greedy Spectral Selection (GSS) adopts a greedy algorithm: it first performs filter-based interband redundancy analysis, then trains a Convolutional Neural Network (CNN) to assess classification performance of the chosen bands \citep{morales2021hyperspectral}. On the other hand, CNN based on Bandwise-independent convolution and Hard thresholding  (BHCNN) jointly optimizes band selection and the classification by applying a hard threshold mask via a straight-through estimator \citep{feng2020hardthresholding}. Although this allows joint optimization, gradients for non-selected bands are ignored, which can hinder the discovery of other informative bands during training. Furthermore, the aforementioned task-aware band selection methods are designed for classification, which may not provide optimal spectral bands for segmentation tasks.

An active sampling strategy for MRI k-space acquisition uses an adversarial neural network to distinguish between sampled and unsampled lines in the Fourier space \citep{zhang2019reducing}. The advarsarial network identifies the k-space line that appears most realistically fake and selects it for acquisition; this process repeats until the desired number of lines is obtained. However, this approach processes k-space data directly, which limits its applicability to MR image undersampling. Reinforcement learning (RL) has also been explored for active acquisition, applying greedy and modified $\epsilon$-greedy policies to actively acquire optimal sampling masks \citep{pineda2020active,bakker2020experimental}. However, RL-based techniques require a pretrained reconstruction network that can handle a variety of sampling patterns. To address this issue, Active DPS (A-DPS) jointly optimizes both the sampling mask and the associated reconstruction network \citep{van2021active}. Building on A-DPS, the proposed Prior-aware and Group-based Active DPS (PGA-DPS) improves optimization robustness and fully leverages prior knowledge embedded in the training data.

\begin{figure}[t]
\begin{center}
\includegraphics[width=\textwidth]{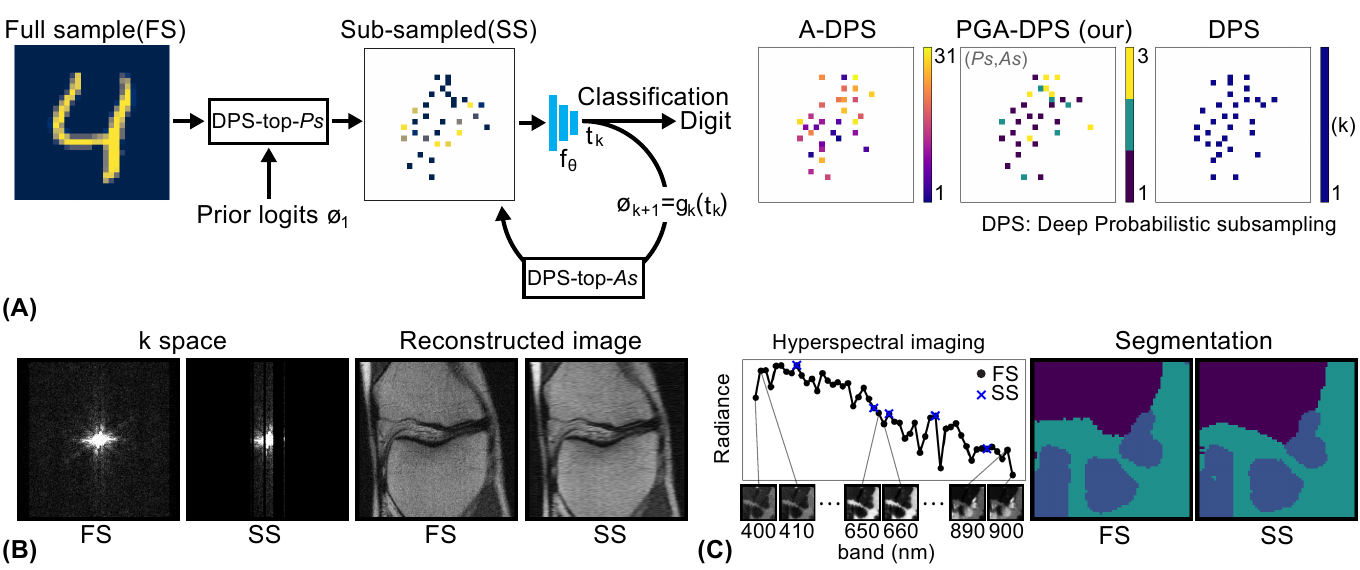}
\end{center}
\caption{\textbf{(A)} A schematic overview of the proposed Prior-aware and Group-based Active DPS (PGA-DPS) applied to classification task on the MNIST dataset. PGA-DPS uses learned prior logits ($\phi_1$), for a fixed sampling mask and then acquires new samples in grouped, active iterations. Here, $Ps$ and $As$ stand for portions of prior and active sampling respectively. For example, DPS picks 31 samples in one step, A-DPS over 31 iterations, and PGA-DPS in just 3 iterations. \textbf{(B)} An example of an MRI reconstruction task. \textbf{(C)} An example of an HSI segmentation task.}
\label{fig1}\end{figure}

\section{Method}
\subsection{Task-adaptive subsampling framework}
We aim to find an optimal subsampling strategy $A \subseteq {\{0,1\}}^N$ on a given input signal $x\in \mathbb{R}^N $ for a specific task \textit{t}, such that the performance of the task is maintained using fewer input samples.
\begin{equation}
\min_{A,\theta} \left\| t - \hat{t}\left(\theta, A\right) \right\| 
\quad \text{subject to} \quad \sum_{i=1}^N A_i = M, \quad \text{where } A_i\in A
\label{eq1}
\end{equation}
\begin{equation}
\hat{t}\left(\theta,A\right) =f_{\theta,A}(Ax)
\label{eq2}\end{equation}
where $f$ is a task model that predicts the task \textit{t} from the input signal $x$, $\theta$ is parameters of the model, and \textit{M} is the target number of subsamples. The proportion of selected samples can be represented by the sampling ratio $r=100 \times M/N \%$. In general, neural networks are widely adopted for task models, such as classification, segmentation, reconstruction, detection, and more; however, backpropagation from the loss function is hindered by the non-differentiable nature of sampling. This issue was alleviated using the Gumbel-Softmax reparameterization trick introduced in Deep Probabilistic Subsampling (DPS).

\subsection{DPS: Deep Probabilistic Subsampling}
\label{Method_Section_DPS}
DPS proposed an end-to-end deep learning framework that jointly optimizes both the subsampling strategy and the downstream task model \citep{huijben2020learning}. During training, DPS learns unnormalized logits $\phi$, which are used to generate a subsampling mask probability $P \left( A|\phi \right)$ from a categorical distribution with $N$ classes: $A \sim \operatorname{Categorical}\left( \frac{\exp(\phi_i)}{\sum_j^N \exp(\phi_j)}, \, i \in \{1,\dots,N\} \right)$. The Gumbel-max trick enables efficient sampling from a categorical distribution \citep{Gumbel1954,Maddison2014}:
\begin{equation}
A_i = 
\begin{cases}
1 & \text{if } i \in \arg \operatorname{max} G_{\phi_i} \\
0 & \text{otherwise}
\end{cases}
\label{eq3}
\end{equation}
\begin{equation}
G_{\phi_i} = G_{i}+\phi_i \sim \operatorname{Gumbel}(\phi_i) , \, i \in \{1,\dots,N\}
\label{eq4}
\end{equation}
where  $\phi\in \mathbb{R}^N$ is an unnormalized logit vector composed of $\phi_i$ and $G$ is a vector of i.i.d noise samples, $G_i$, drawn from a Gumbel distribution $\sim \text{Gumbel}(0, 1)$. However, since the arg max function is non-differentiable and thus unsuitable for backpropagation, DPS employs the Gumbel-Softmax trick by relaxing the max operation into a differentiable softmax function as follows \citep{Jang2017,Maddison2017}:
\begin{equation}
\nabla_{\phi} A := \nabla_{\phi} softmax_{\tau} \left(G_{\phi}\right)
\label{eq5}\end{equation}
where $\tau$ is a temperature parameter that controls the degree of relaxation, enabling gradients to be distributed across multi-dimensional logits when $\tau > 0$ during training. In addition, as the temperature decreases $\tau \rightarrow 0$, $softmax_{\tau}$ can anneal into a categorical distribution. 
Although tuning the relaxation with $\tau$ is crucial for managing the bias-variance trade-off of the gradient estimator \citep{tucker2017rebar}, the DPS architecture has been shown to perform effectively with a fixed temperature during training ($\tau=2$).

In addition, the Gumbel-max trick can be extended to the Gumbel top-$k$ trick to efficiently select the $k$ most probable samples, rather than just the single highest-probability one \citep{kool2019stochastic,plotz2018neural}.
\begin{equation}
A = \arg\,\underset{k}{\operatorname{top}} \left( \phi + G \right)
\label{eq6}
\end{equation}
DPS introduces two variants for selecting $k$ samples: DPS-top-1, which sequentially samples from $k$ independent categorical distributions---each with its own trainable logits---and DPS-top-$k$, which selects all $k$ samples simultaneously using a single shared logit for greater parameter efficiency. Although DPS-top-1 is more expressive, the DPS-top-$k$ approach showed improved performance.

Similarly, Active Deep Probabilistic Subsampling (A-DPS) introduces an active sampling that uses contextual information from previously selected samples to guide subsequent selection, allowing adaptation to new data \citep{van2021active}. In A-DPS, samples are chosen iteratively by applying DPS-top-1 at every step.
\begin{equation}
A^j = \arg\,\underset{1}{\operatorname{top}} \left\{ w^{j-1}+\phi^j + G^j \right\},\quad j=1,2,\dots,K
\label{eq7}\end{equation}
\begin{equation}
\phi^j = g^j(t^j), \quad  t^j = f_{\theta}(A^{j-1}x)
\label{eq8}
\end{equation}
where $j$ denotes the iteration; $w^j \subseteq {\{-\infty,0\}}^N$ is a cumulative mask that assigns minus infinity to previously selected elements, ensuring they are excluded after re-normalization; $\phi^j \in \mathbb{R}^N$ and $G^j \in \mathbb{R}^N$ are $\phi$ and $G$ for $j^{th}$ sample, respectively; $g^j$ is a sampling network in $j^{th}$ iteration that encodes the current task context based on the samples selected so far; and $f_{\theta}$ is a differentiable model that generates the current task prediction. The context is built using an \textit{analysis-by-synthesis} approach, where the task (synthesis) module guides the sampling (analysis). In A-DPS, losses are accumulated over all iterations, and the network is updated in a semi-greedy manner.

\begin{figure}[t]
\begin{center}
\includegraphics[width=\textwidth]{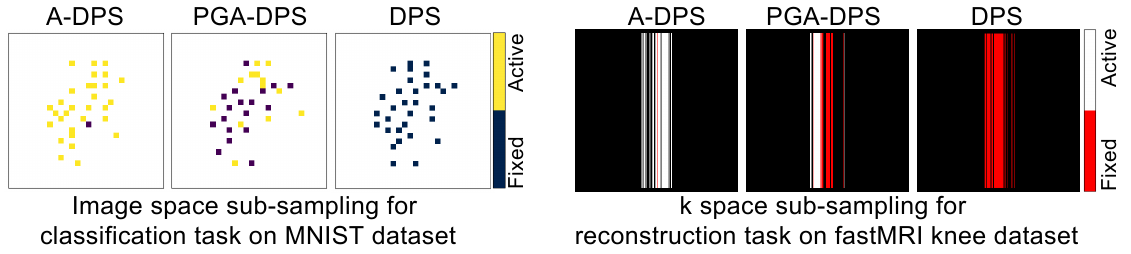}
\end{center}
\caption{An illustration of active sampling strategy for A-DPS and PGA-DPS, whereas a fixed (deterministic) masking pattern is used for DPS.}
\label{fig2}\end{figure}

\subsection{PGA-DPS: Prior-aware and Group-based Active DPS}
PGA-DPS initiates with a deterministic subsampling pattern and then iteratively acquires additional samples through active sampling (Fig. \ref{fig2}). The deterministic pattern drawn from the training data's prior is followed by active sampling that exploits each new input's contextual information, thereby combining global data prior with input specific context. 

Moreover, PGA-DPS uses DPS-top-$k$ to select samples in groups for active sampling, rather than DPS-top-1 in A-DPS, enabling it to achieve a smaller effective Lipschitz constant. We analyze group sampling enabled by DPS-top-$k$ in terms of the loss, which captures the Lipschitzness of the loss and reflects the smoothness of the optimization landscape. The proof is available in Appendix \ref{Apppendix_proof}.

\begin{theorem}
(The effect of group sampling on the Lipschitzness of the loss). Let \( f_1, f_2, \dots, f_k \) be task models, where each function \(f_j\) has a corresponding Lipschitz constants \( L_j\) for \(j=1,2,\dots,k\). In DPS-top-1, k task functions are required for each input sample \(x_k\), whereas DPS-top-k defines a single task function \(f_k\) over group of k samples. The Lipschitzness of the loss function for each DPS-top-1 and DPS-top-$k$ is as follows:
\[
\left\| \text{Loss}_{\text{DPS-top-1}}(x) \right\|^2 = \left\| f(x^*) - f(x) \right\|^2 \leq \prod_{r=1}^{k} L_r \left\| x_1^* - x_1 \right\|^2, \quad f=f_k(f_{k-1}(\dots(f_1(x_1))\dots)) \quad  
\]
\[
\left\| Loss_{\text{DPS-top-$k$}} (x)\right\|^2 = \left\|f_k(x^*) - f_k(x)\right\|^2=\left\| f_k(x^*)-f_k(x_1,x_2,\dots,x_k) \right\|^2 \leq L_k \left\|x^* -x\right\|^2
\]
Here, $x^* \in \mathbb{R}^N$ is the fully sampled input signal (unsampled signal = 0), $x \in \mathbb{R}^N$ is the sampled input signal, and $x_r$ is the $r^{th}$ sample. If the Lipschitz constant of task model at each iteration \( L_j \geq 1\),
\[
L_k\leq \prod_{r=1}^{k}L_r  \implies \operatorname{sup}_x \left\| \text{Loss}_{\text{DPS-top-$k$}} (x)\right\|^2 \leq \operatorname{sup}_x \left\| \text{Loss}_{\text{DPS-top-1}} (x)\right\|^2 \
\]
\label{theorem1}
\end{theorem}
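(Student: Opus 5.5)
The statement has two parts. The first is a bound for each scheme, obtained from the Lipschitz hypotheses on the individual task models. The second is a comparison of the two bounds, obtained once $L_j \ge 1$ is assumed. The plan is to use only the definition of Lipschitz continuity, $\|f_j(u)-f_j(v)\| \le L_j\|u-v\|$, and to take the norms as stated. Squaring a Lipschitz inequality gives the constant $L_j^2$. The statement writes $L_j$ instead, so I will read $L_j$ as a Lipschitz constant for the squared discrepancy, i.e.\ $\|f_j(u)-f_j(v)\|^2\le L_j\|u-v\|^2$. This is just a renaming, and I will state it explicitly at the start.

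\textbf{DPS-top-1 bound.} In DPS-top-1 the prediction is the composition $f=f_k\circ f_{k-1}\circ\cdots\circ f_1$. The first model acts on the first acquired sample $x_1$, and each later stage refines the previous output. I will prove by induction on $m$ that
\[
\|F_m(x_1^*)-F_m(x_1)\|^2\le \prod_{r=1}^m L_r\,\|x_1^*-x_1\|^2, \qquad F_m=f_m\circ\cdots\circ f_1 .
\]
The inductive step applies the Lipschitz inequality of $f_{m+1}$ with arguments $F_m(x_1^*)$ and $F_m(x_1)$, then invokes the hypothesis for $F_m$. Setting $m=k$ gives the first display.

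\textbf{DPS-top-$k$ bound.} In DPS-top-$k$ the single model $f_k$ acts on the whole group $x=(x_1,\dots,x_k)$, with unsampled entries set to zero. The bound is then one application of the Lipschitz inequality of $f_k$ to the pair $x^*$ and $x$.

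\textbf{Comparison.} If every $L_j\ge1$, then $\prod_{r=1}^k L_r = L_k\prod_{r<k}L_r\ge L_k$, since a product of factors that are each at least $1$ is at least $1$. Taking the supremum over admissible $x$ of each bound then gives the stated ordering of the worst-case losses.

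\textbf{Main obstacle.} The hardest step is making this last implication rigorous. The two bounds have different right-hand sides: $\|x_1^*-x_1\|^2$ in one case and $\|x^*-x\|^2$ in the other. The Lipschitz bounds also control only upper estimates, not the actual suprema. I would handle this by interpreting both suprema as the worst-case bounds over a common bounded input set of diameter $D$, so that both input discrepancies are at most $D^2$. The ordering of the constants then transfers directly to the ordering of the worst-case bounds. I would state explicitly that the theorem compares these Lipschitz certificates, which is a statement about smoothness of the loss landscape, rather than the exact suprema.
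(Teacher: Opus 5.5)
Your proposal follows the paper's proof essentially step for step: induction through the composition $f_k\circ\cdots\circ f_1$ for DPS-top-1, a single application of the Lipschitz inequality for DPS-top-$k$, and $\prod_{r=1}^{k}L_r\ge L_k$ from $L_j\ge 1$, with the same implicit convention that $L_j$ bounds squared discrepancies. Your caveat about the last implication is well founded, because the paper simply asserts that the ordering of constants transfers to the suprema, so what the argument actually establishes is a comparison of Lipschitz certificates; moreover, you can read the top-1 right-hand side as $\|x^*-x_1\|^2$ (as the paper's proof writes it), with $x_1$ zero-padded and contained in $x$, which gives $\|x^*-x\|\le\|x^*-x_1\|$ and orders the two bounds pointwise without needing a diameter $D$.
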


Neural networks typically have Lipschitz constants much greater than one, except in the trivial near-identity case \citep{malherbe2017global,bartlett2018representing,latorre2020lipschitz,shi2022efficiently}. Because our task of classification, image reconstruction (from k-space to image), and segmentation, are far from identity mappings, DPS-top-$k$ exhibits a smaller effective Lipschitz constant than DPS-top-1, whose effective constant is the product of multiple Lipschitz constants. Moreover, the same analysis can apply to the Lipschitz constants of gradient loss $\left\| \nabla Loss \right\|$. Consequently, PGA-DPS creates smoother latent space, resulting in more stable and efficient optimization \citep{virmaux2018lipschitz,berkenkamp2017safe,liu2022learning,gouk2021regularisation,santurkar2018does}. This theoretical analysis aligns with previous experimental results that Top-$k$ sampling outperformed Top-1 sampling in DPS \citep{huijben2020deep}.

\section{Experiments}
\subsection{MNIST classification}
\subsubsection{Experiment setup} We evaluated the classification performance under pixel subsampling using the MNIST dataset \citep{lecun1998gradient}.  The 70,000 grayscale images of 28 $\times$ 28 pixels, representing handwritten digits from 0 and 9, were split into train, validation, and test sets with a ratio of 5:1:1, respectively. We compare the proposed PGA-DPS method with DPS and A-DPS across various sampling ratios, ranging from 1\% to 8\% with the increment of 1\%. Furthermore, we also report the reference performance obtained using all samples (100\%).

\subsubsection{Task model}
A multi layer perceptron (MLP) with 5 layers is used for the classification network $f_{\theta}(\cdot)$. The MNIST images are flattened and masked according to the subsampling pattern before being fed into the network. The sampling network $g^k(\cdot)$ of an LSTM (Long Short-Term Memory) followed by a two-layer MLP, which encodes the current task context and generates the corresponding sampling mask. Both the classification network and sampling network are simultaneously trained using a categorical cross-entropy loss. In addition, the trainable logits of size 784 ($\phi_1$) are included in the network for DPS and PGA-DPS to provide a deterministic sampling pattern. Detailed model architecture and training settings are provided in the Appendix \ref{Appendix_training_details_MNIST}.

In PGA-DPS, the proportions of prior (deterministic) sampling and active sampling are fixed to 60 and 20 \%. Here, the $Ps$ and $As$ denote the proportions of prior and active sampling, respectively. ($Ps$, $As$)=(60, 20) implies that 60\% of the target samples are selected by prior sampling, and the remaining samples (40\%) are acquired over two active group sampling iterations of 20\% each.

\subsubsection{Results}
The classification accuracy on test set is shown in Table \ref{table1}. PGA-DPS outperforms both DPS and A-DPS across all sampling ratios, with the largest gain at the low sampling ratios. The active sampling method leverages previously selected samples for adapting its sampling pattern, which is particularly effective under extreme subsampling conditions. A-DPS showed strong performance when sampling ratio ($r$) is below 4\%, whereas DPS works better when more samples are available. This likely arises because A-DPS trains a separate classifier for each sampled pixel, inflating its Lipschitz constant (as discussed in Theorem \ref{theorem1}).

\renewcommand{\arraystretch}{1.1}
\begin{table}[h]
\centering
\caption{The classification accuracy on the MNIST test set (10,000 samples) for various subsampling ratios ($r$). Each sampling strategy was repeated across six independent runs.}
\label{table1}
\begin{tabular}{cccccccccc}
\hline
      & \multicolumn{9}{c}{Sampling ratio: r (\%)}               \\
Sampling Model  & 1    & 2    & 3    & 4    & 5    & 6    & 7    & 8   & 100 \\ \hline
DPS   & 63.5 & 83.6 & 91.5 & 95.2 & \underline{96.4} & \underline{97.1} & \underline{97.3} & \underline{97.6} & 98.2\\
A-DPS & \underline{64.6} & \underline{85.2} & \underline{92.2} & \underline{95.3} & 96.1 & 96.6 & 97.1 & 97.2 &-\\
PGA-DPS & \textbf{68.8} & \textbf{86.9} & \textbf{93.9} & \textbf{95.8} & \textbf{96.7} & \textbf{97.2} & \textbf{97.5} & \textbf{97.7} &-\\ \hline
\end{tabular}
\end{table}

\subsection{CIFAR-10 classification}
\subsubsection{Experiment setup} 
We evaluated pixel-level subsampling for classification using the CIFAR-10 dataset \citep{krizhevsky2009learning}, which contains 60,000 color images of size 32 x 32 across 10 classes. Of these, 50,000 images were used for training and validation, and 10,000 for testing. The training split was further divided into training and validation set at a 9:1 ratio. We compared the proposed PGA-DPS, DPS and A-DPS methods across sampling ratios ranging from 2\% to 20\% in increments of 2\%. The reference performance using all samples (100 \%) is also reported.

\subsubsection{Task model}
For the classification model $f_{\theta}(\cdot)$, we employed a four-layer convolutional neural network (CNN) followed by three MLP layers. The CIFAR-10 images are first masked according to the subsampling pattern and then fed into the network. We employed the same sampling network architecture and the same cross-entropy loss function as in the MNIST classification experiments. The detailed model architecture and training settings are provided in the Appendix \ref{Appendix_training_details_CIFAR}. In PGA-DPS, the proportions of prior sampling and active sampling are fixed to 10 \% and 20 \%, respectively.

\renewcommand{\arraystretch}{1.1}
\begin{table}[h]
\centering
\caption{The classification accuracy on the CIFAR-10 test set (10,000 samples) for various subsampling ratios ($r$). Each sampling strategy was repeated across six independent runs.}
\label{table_new_CIFAR}
\begin{tabular}{cccccccccccc}
\hline
      & \multicolumn{11}{c}{Sampling ratio (\%)}               \\
Sampling Model  & 2    & 4    & 6    & 8    & 10    & 12    & 14    & 16 & 18 & 20  & 100 \\ \hline
DPS   & 38.6 & 44.5 & 43.4 & 44.1 & 46.1 & 46.5 &  49.2 & 50.5 & 52.1 & 54.3 & 96.3 \\
A-DPS & 52.9 & 61.0 & 65.8 & 67.4 & 70.4 & 69.6 & 70.7 & 69.5 & 70.0 & 68.3 &-\\
PGA-DPS & \textbf{54.3} & \textbf{63.2} & \textbf{68.6} & \textbf{70.8} & \textbf{74.7} & \textbf{78.1} & \textbf{78.6} & \textbf{79.8} & \textbf{81.3} & \textbf{82.7} &-\\ \hline
\end{tabular}
\end{table}

\subsubsection{Results}
Table \ref{table_new_CIFAR} shows that the proposed PGA-DPS outperforms both DPS and A-DPS across all sampling ratios. Because CIFAR-10 classification is substantially more challenging than MNIST, the performance gains are larger across the entire range. Unlike the MNIST classification task, A-DPS exhibits consistently strong performance relative to DPS for all ratios. This behavior is likely attributable to the lower Lipschitz constant of CNNs compared to MLPs used in the MNIST experiments \citep{shi2022efficiently}, which makes the optimization landscape easier to navigate. However, beyond a certain number of samples ($r$ = 14 \%), the classification accuracy of A-DPS decreased, likely due to an inflated Lipschitz constant, similar to what was observed in the MNIST classification task.

\subsection{MRI reconstruction}
\subsubsection{Experiment setup}
To evaluate the performance of k-space subsampling, we tested on the single-coil knee RAW acquisitions from the fastMRI dataset \citep{zbontar2018fastmri}. A total of 13,000 images, excluding the outer slices, were split into an 8:2:3 ratio for train, validation, and test, respectively. All slices were cropped to central 208 $\times$ 208 pixels and normalized to [0, 1]. Vertical Cartesian binary masks ($\mathbf{M}$) were used for all sampling models, where one column line corresponds to acquiring one phase-encoding line. We simulated the partially acquired k-space by applying the Cartesian binary mask to a fully sampled k-space:
\begin{equation}
\mathbf{Y} = \left| \mathcal{F}^{-1}\mathbf{M}\mathcal{F}\mathbf{X} \right|
\label{eq9}
\end{equation}
where $\mathcal{F}, \mathcal{F}^{-1}$ are the forward and inverse Fourier transforms, respectively; $\mathbf{Y} \in \mathbb{R}^{N \times N}$ is the image from subsampled k-space; $\mathbf{X} \in \mathbb{R}^{N \times N}$ is the ground truth image from fully sampled k-space; and $\left| \cdot \right|$ denotes the magnitude operation. Following the previous study's setup \citep{van2021active}, we take the magnitude of complex-valued Y to simplify the image reconstruction. 

\subsubsection{Task model}
For the reconstruction task $f_{\theta}(\cdot)$, a deep unfolded proximal gradient method is employed to reconstruct the original image $\hat{\mathbf{X}}$ from the partially acquired measurement $\mathbf{Y}$ \citep{mardani2018neural}, by unrolling iterations of the proximal gradient algorithm into a feed-forward neural network:
\begin{equation}
\hat{\mathbf{X}}^{k+1} = \mathcal{P}_{\psi}
\left( \hat{\mathbf{X}}^{k} - \alpha \nabla \left|\mathbf{Y}- \mathcal{F}^{-1}\mathbf{M}\mathcal{F}\mathbf{X}\right| \right)
\label{eq10}
\end{equation}
where $\mathcal{P}_{\psi}$ denotes the proximal neural network incorporating the image prior regularizer (${\psi}$), and $\alpha$ is the step size. The sampling network $g^k(\cdot)$ consists of a 3-layer CNN, followed by global average pooling, an LSTM, and a single MLP, in sequence. In this way, the current reconstruction result is encoded into the LSTM to generate logits ($\phi$), which are used to select the subsequent (active) sample. The reconstruction model and training details are provided in the Appendix \ref{Appendix_training_details_MRI}.

We compare PGA-DPS to other sampling methods, namely, low-pass, variable density sampling (VDS) \citep{lustig2007sparse}, greedy mask selection \citep{sanchez2020scalable}, LOUPE \citep{bahadir2020deep}, DPS, and A-DPS. In the loss-pass method, lines nearest to the DC (k-space origin) components are selected; in VDS, sampling density decays with distance from the k-space origin, concentrating the samples centrally and tapering off toward the periphery. In the greedy mask strategy, the mask pattern is optimized using a stochastic greedy algorithm \citep{sanchez2020scalable}, implemented via the NESTA solver \citep{becker2011nesta}. Afterwards, the corresponding proximal gradient network was trained using the fixed optimal mask obtained from the greedy algorithm (Greedy Prox.).

\subsubsection{Results}
To assess the influence of the sampling hyperparameters (\textit{Ps, As}), we conducted an ablation study using their various combinations. Table \ref{table2} demonstrates the MR reconstruction performance on the hold-out test set, evaluated using three metrics: normalized mean square error (NMSE), peak signal-to-noise ratio (PSNR), and structural similarity index (SSIM). Reconstruction performance shows an overall increasing trend with higher \textit{As} (i.e., larger k in top-k sampling), across various levels of prior sampling (\textit{Ps}), even without the use of prior-aware deterministic sampling, which is equivalent to A-DPS with group sampling. This trend is consistent with the theoretical explanation proposed in Theorem \ref{theorem1}. However, the improvement plateaus beyond an \textit{As} of 15-20\% , or even declines slightly, presumably due to the reduced rate of active sampling. While PGA-DPS shows slight performance variation across (\textit{Ps, As}) configurations, it outperforms A-DPS in most cases.

We found the (30\%, 30\%) configuration of (\textit{Ps, As}) to be the optimal and fixed it for further MRI reconstruction analysis. Furthermore, this optimal configuration generalizes well across different sampling ratios of k-space lines M = 15, 20, 30, and 35 (Table \ref{tab:generalization} in the Appendix). The performance gain is particularly notable in the low-measurement regime, where the number of k-space lines is small, further demonstrating the robustness of our method under limited data conditions.


\begin{table}[ht]
\centering
\caption{Average results over 10 runs on the hold-out test set of size $208 \times 208$ pixels for an acceleration factor of 8 ($M = 26$, $r = 12.5\%$), using PGA-DPS across various $Ps$ and $As$.}
\label{table2}
\begin{tabular}{llcccccc}
\hline
\textbf{} & \textbf{} & $As$ = 5\% & 10\% & 15\% & 20\% & 30\% & 40\% \\
\hline
\multirow{3}{*}{$Ps$ = 0\%} 
  & NMSE & 0.0398  & 0.0393 & 0.0389 & 0.0373 & 0.0376 & 0.0387\\
  & PSNR & 25.4  & 25.4 & 25.5 & 25.7 & 25.6 & 25.5\\
  & SSIM & 0.576 & 0.581 & 0.584 & 0.598 & 0.591 & 0.585\\
\hline
\multirow{3}{*}{$Ps$ = 30\%} 
  & NMSE & 0.0388 & 0.0393 & 0.0378 & 0.0370 & \textbf{0.0359} & 0.0359 \\
  & PSNR & 25.5   & 25.4   & 25.6   & 25.8   & \textbf{25.9}   & 25.9 \\
  & SSIM & 0.588 & 0.585  & 0.597  & 0.618  & \textbf{0.621}  & 0.619 \\
\hline
\multirow{3}{*}{$Ps$ = 50\%} 
  & NMSE & 0.0387 & 0.0377 & 0.0372 & 0.0364 & 0.0366 & 0.0367 \\
  & PSNR & 25.6   & 25.6   & 25.7   & 25.9   & 25.8   & 25.8 \\
  & SSIM & 0.592  & 0.598  & 0.603  & 0.615  & 0.610  & 0.611 \\
\hline
\multirow{3}{*}{$Ps$ = 70\%} 
  & NMSE & 0.0370 & 0.0377 & 0.0368 & 0.0375 & 0.0371 & \textemdash \\
  & PSNR & 25.7   & 25.7   & 25.8   & 25.7   & 25.7   & \textemdash \\
  & SSIM & 0.596  & 0.600 & 0.606  & 0.601  & 0.600  & \textemdash \\
\hline
\end{tabular}
\end{table}
 
As shown in Table \ref{table3}, the proposed PGA-DPS significantly outperforms all other sampling strategies across all evaluation metrics. The example reconstruction from masks generated by PGA-DPS, A-DPS, DPS, and other sampling methods are shown in the Appendix (Fig. \ref{fig3}). 

Since DPS optimizes its mask to perform well on average, it heavily concentrates on DC (low-frequency) lines, that capture overall structure of the input data (Fig. \ref{fig2}). In contrast, PGA-DPS and A-DPS select more samples farther from the DC region, capturing richer structural details. Note that A-DPS always begins at the center line, whereas PGA-DPS starts with a mix of central and peripheral lines, achieving a more balanced representation of frequency components.

\renewcommand{\arraystretch}{1.1}
\begin{table}[t]
\centering
\caption{Average results over 10 runs on the hold-out test set of size 
208×208 pixels for an acceleration factor of 8 ($M=26, r=12.5\%$) compared to other sampling methods.}
\label{table3}
\begin{tabular}{cccccccc}
\hline
      & \multicolumn{7}{c}{Sampling Model}               \\
&Low pass&VDS&Greedy Prox.&LOUPE&DPS&A-DPS&PGA-DPS\\ \hline
NMSE &0.0462 &0.0471 &0.0425 &0.0465 &0.0408 &0.0398 &\textbf{0.0359}\\
PSNR &24.5  &24.5   &25.0   &25.1   &25.3   &25.4   &\textbf{25.9}\\
SSIM &0.511 &0.533  &0.536  &0.574  &0.571  &0.576  &\textbf{0.621}\\ \hline
\end{tabular}
\end{table}

\subsection{MRI reconstruction with active baselines}
\subsubsection{Experiment setup}
To further validate the effectiveness of our method, we conducted an additional comparison against reinforcement learning (RL)-based sampling strategies: Evaluator Policy \citep{zhang2019reducing} and Data-Specific Double Deep Q-Networks (DS-DDQN) \citep{pineda2020active}. To ensure compatibility with the publicly available DS-DDQN checkpoints, we applied the same pre-processing procedures used in their original work. In addition, the trained checkpoints were used for DPS (DPS\_16lines\_0seed\_Pineda) and A-DPS (ADPS\_16lines\_0seed\_Pineda). We adopt the train-validation-test split protocol from \citep{van2021active}, yielding 34,742 training, 1,785 validation, and 1,851 testing images. Each input corresponds to a k-space volume of size 368×640. For evaluation, reconstruction scores are computed over the central 320×320 region, excluding peripheral background areas. PGA-DPS was trained for 5 epochs with a batch size of 1, following the training scheme provided in A-DPS \citep{van2021active}. The Ps and As ratios were both set to 30\%.

\subsubsection{Results}
Table \ref{table4} presents the MR reconstruction results on the hold-out test set with an acceleration factor of 8. We adopt the Scenario-30L setting, in which 30 Auto-Calibration Signal (ACS) lines are fixed, and the remaining 16 target samples are selected. PGA-DPS outperforms all other sampling methods under this setting. Notably, the overall PSNR and SSIM achieved in this experiment is substantially higher than that reported in our previous MRI study using a smaller matrix size (208x208), reflecting the benefits of using higher-resolution k-space data with abundant ACS lines.

\begin{table}[h]
\centering
\caption{Results on the hold-out test set of size $368 \times 640$ pixels for an acceleration factor of 8 ($M = 46$, $r = 12.5\%$, ACS = 30) compared to RL approaches.}
\label{table4}
\begin{tabular}{lccc}
\hline
Sampling Method & NMSE & PSNR & SSIM \\
\hline
Evaluator policy \citep{zhang2019reducing}  & 0.0398 & 28.8 & 0.610 \\
DS-DDQN \citep{pineda2020active}          & 0.0370 & 29.2 & 0.623 \\
DPS \citep{huijben2020deep}              & 0.0360 & 30.1 & 0.650 \\
A-DPS  \citep{van2021active}         & 0.0342 & 30.2 & 0.654 \\
PGA-DPS (Proposed) & \textbf{0.0331} & \textbf{30.5} & \textbf{0.668} \\
\hline
\end{tabular}
\end{table}

\subsection{Hyperspectral image (HSI) segmentation}
\subsubsection{Experiment setup}
To analyze the generalizability of PGA-DPS, we evaluated its performance on hyperspectral band selection for segmentation. We used the AeroRIT dataset \citep{rangnekar2020aerorit}, which consists of aerial images with hyperspectral bands from 400 to 900 nm in 10 nm increments, resulting in a total of 51 spectral bands. Following the original study's split, 1920 $\times$ 3968 image was divided into the left 1920 $\times$ 1728 for training, the next 1920 $\times$ 512 for validation, and the final 1920 $\times$ 1728 for testing. We extracted 64 $\times$ 64 patches with 50\% overlap from the training set and non-overlapping patches for validation and test sets. The segmentation labels include five classes : (1) roads, (2) buildings, (3) vegetation, (4) cars, and (5) water.

\subsubsection{Task model}
For the segmentation network $f_{\theta}(\cdot)$, we use a residual U-Net consisting of 6 ResNet blocks following the pix2pix design \citep{isola2017image,zhu2017unpaired}: this architecture is referred to as Res-U-net in the AeroRIT study \citep{rangnekar2020aerorit}. The sampling network $g^k(\cdot)$ consists of a 4-layer CNN, with each layer followed by Batch normalization and ReLU activation. The output of the CNN is aggregated into a feature vector via global average pooling, which is then fed to an LSTM and a single-layer MLP to generate logits.

The categorical cross-entropy loss is computed for 5 classes and the model with the highest mean intersection over union (mIOU) on the validation set is selected for further evaluation. Detailed model architecture and training settings are provided in the Appendix \ref{Appendix_training_details_HSI}. In PGA-DPS, the proportions of prior (deterministic) sampling ($Ps$) and active sampling ($As$) were 80\% and 20\%, respectively.

Three different DPS methods (DPS, A-DPS, and PGA-DPS) are implemented and compare to the conventional methods including uniform sampling, SRL-SOA (Self-Representation Learning with Sparse 1D-Operational Autoencoder) \citep{ahishali2022srl}, and GSS (Greedy Spectral Selection) \citep{morales2021hyperspectral}. In the uniform strategy, equidistant spectral bands are selected. In SRL-SOA, the polynomial order of the Taylor series expansion used to estimate the original signal using the autoencoder is set to 3. In the GSS approach, classification is performed on the center pixel of a 5 $\times$ 5 patch. For comparison, we also implemented the reference method 'All-bands', which does not use band selection and utilizes all 51 hyperspectral images.

\subsubsection{Results}
Table \ref{table5} shows segmentation performance using three metrics: MPCA (mean per-class accuracy), mIOU (mean Intersection over Union), and mDICE (mean Sørensen–Dice coefficient). PGA-DPS outperformed all other methods, achieving performance comparable to that of using all 51 bands, while enabling over 10-fold acceleration. A-DPS produces less reliable segmentation maps, likely because the complexity of the segmentation makes joint optimization of the sampling strategy and multiple task models difficult. To preserve optimization stability, PGA-DPS limits active sampling at 20\%. Example segmentation maps from subsampled bands selected by All-bands, PGA-DPS, A-DPS, DPS, and other band selection methods are shown in the Appendix (Fig. \ref{fig4}).

\renewcommand{\arraystretch}{1.1}
\begin{table}[ht]
\centering
\caption{Average segmentation results over 10 runs on the hold-out AeroRIT test set (3,127 patch samples) for band selection (using 5 bands, i.e. $r \approx 9.8\%$), compared to other approaches.}
\label{table5}
\begin{tabular}{ccccccccc}
\hline
      & \multicolumn{6}{c}{Band Selection Model}               \\
&All bands (Ref)&Uniform&SRL-SOA&GSS&DPS&A-DPS&PGA-DPS\\ \hline
MPCA ($\uparrow$)&\textbf{88.36}&80.59&83.68&77.91&85.75&65.55&\underline{86.37}\\
mIOU ($\uparrow$)&\textbf{0.7037}&0.5992&0.6160&0.6220&0.6703&0.5181&\underline{0.6752}\\
mDICE ($\uparrow$)&\textbf{0.8063}&0.6937&0.7476&0.7144&0.7732&0.6158&\underline{0.7803}\\ \hline
\end{tabular}
\end{table}

\section{Conclusion}
We developed a scalable, stable active subsampling technique, called Prior-aware and Group-based Active DPS (PGA-DPS), that integrates context-guided group sampling and deterministic (fixed) subsampling based on training-data prior. PGA-DPS not only provides a theoretical analysis for the stable optimization of group sampling, enabled by DPS-top-$k$ rather than DPS-top-1, but also shows consistent and improved performance compared to conventional A-DPS. We demonstrated the effectiveness of PGA-DPS across three diverse tasks---classification, image reconstruction, and segmentation---using the MNIST, CIFAR-10, fastMRI knee, and AeroRIT datasets, outperforming all other sampling methods. Since top-k sampling and prior-aware deterministic sampling can be easily applied to various sampling tasks, PGA-DPS may serve as a powerful tool for improving performance in real-world applications where acquisition cost and inference efficiency are critical, such as CT, ultrasound, and radar systems.

\section{Limitation \& Future Work}
Although PGA-DPS excels at generating subsampling strategies for diverse tasks, one possible limitation lies in selecting the hyperparameters associated with the proportion of fixed and active subsampling, denoted as $Ps$ and $As$, respectively. The optimal sampling portion differs depending on the downstream task, task model, and sampling ratio. We observed that an active sampling ratio (\textit{As}) range of 20-30\% provides the best trade-off between the benefit of top-k sampling and the total budget allocated to active sampling, which narrows the search space for optimal configurations.

To determine an appropriate prior-sampling ratio (\textit{Ps}), we recommend using the performance discrepancy between DPS and A-DPS as an empirical indicator of the underlying Lipschitz characteristics of the task. Specifically, if DPS consistently outperforms A-DPS under a given configuration, this suggests that the optimization landscape exhibits a larger effective Lipschitz constant---implying that a higher Ps would be beneficial. Conversely, if A-DPS achieves superior performance, the landscape is likely smoother, in which case a smaller Ps is preferable.

While optimal hyperparameter (\textit{Ps, As}) ranges are suggested, automatically tuning them \citep{bergstra2011algorithms,feurer2019hyperparameter,bischl2023hyperparameter} could enhance the robustness of the proposed method. Future research may explore joint optimization of the active sampling hyperparameters (\textit{Ps}, \textit{As}), the subsampling trajectory, and the downstream task model.

We demonstrate that using a fixed temperature yields robust results (Appendix \ref{appendix_temperature}); however, tuning or annealing the temperature could further enhance DPS-based methods. A more comprehensive investigation of temperature scheduling therefore represents a promising avenue for future work.

\subsubsection*{Acknowledgments}
This work was supported, in part, by the Institute of Information \& Communications Technology Planning \& Evaluation(IITP)-ITRC(Information Technology Research Center) grant funded by the Korea government(MSIT)(IITP-2026-RS-2024-00436857), and by the National Research Foundation of Korea(NRF) grant funded by the Korea government(MSIT)(NRF-2026-RS-2024-00338025).

\subsubsection*{Reproducibility statement}
The experimental settings and models used in our work are described in the Experimental Setup sections of each task. The task models and baseline approaches are explained in detail in the Task Model section. Additional implementation details are provided in the Appendix~\label{Appendix_training_details}.

\subsubsection*{Ethics statement}
We used large language models (LLM) solely for line-level language editing, including grammar correction and minor phrasing refinements, during the preparation of this paper. LLMs were not involved in any technical aspects of the work.

\bibliography{iclr2026_conference}

@inproceedings{van2021active,
  title={Active deep probabilistic subsampling},
  author={Van Gorp, Hans and Huijben, Iris and Veeling, Bastiaan S and Pezzotti, Nicola and Van Sloun, Ruud JG},
  booktitle={International Conference on Machine Learning},
  pages={10509--10518},
  year={2021},
  organization={PMLR}
}

@article{huijben2020learning,
  title={Learning sub-sampling and signal recovery with applications in ultrasound imaging},
  author={Huijben, Iris AM and Veeling, Bastiaan S and Janse, Kees and Mischi, Massimo and van Sloun, Ruud JG},
  journal={IEEE Transactions on Medical Imaging},
  volume={39},
  number={12},
  pages={3955--3966},
  year={2020},
  publisher={IEEE}
}

@book{Gumbel1954,
  title={Statistical theory of extreme values and some practical applications: a series of lectures},
  author={Gumbel, Emil Julius},
  volume={33},
  year={1954},
  publisher={US Government Printing Office}
}

@article{Maddison2014,
  title={A* sampling},
  author={Maddison, Chris J and Tarlow, Daniel and Minka, Tom},
  journal={Advances in neural information processing systems},
  volume={27},
  year={2014}
}

@inproceedings{
Jang2017,
title={Categorical Reparameterization with Gumbel-Softmax},
author={Eric Jang and Shixiang Gu and Ben Poole},
booktitle={International Conference on Learning Representations},
year={2017},
}

@inproceedings{
Maddison2017,
title={The Concrete Distribution: A Continuous Relaxation of Discrete Random Variables},
author={Chris J. Maddison and Andriy Mnih and Yee Whye Teh},
booktitle={International Conference on Learning Representations},
year={2017},
}

@article{tucker2017rebar,
  title={Rebar: Low-variance, unbiased gradient estimates for discrete latent variable models},
  author={Tucker, George and Mnih, Andriy and Maddison, Chris J and Lawson, John and Sohl-Dickstein, Jascha},
  journal={Advances in Neural Information Processing Systems},
  volume={30},
  year={2017}
}

@inproceedings{kool2019stochastic,
  title={Stochastic beams and where to find them: The gumbel-top-k trick for sampling sequences without replacement},
  author={Kool, Wouter and Van Hoof, Herke and Welling, Max},
  booktitle={International Conference on Machine Learning},
  pages={3499--3508},
  year={2019},
  organization={PMLR}
}

@article{plotz2018neural,
  title={Neural nearest neighbors networks},
  author={Pl{\"o}tz, Tobias and Roth, Stefan},
  journal={Advances in Neural information processing systems},
  volume={31},
  year={2018}
}

@article{lecun1998gradient,
  title={Gradient-based learning applied to document recognition},
  author={LeCun, Yann and Bottou, L{\'e}on and Bengio, Yoshua and Haffner, Patrick},
  journal={Proceedings of the IEEE},
  volume={86},
  number={11},
  pages={2278--2324},
  year={1998},
  publisher={Ieee}
}

@article{latorre2020lipschitz,
  title={Lipschitz constant estimation of neural networks via sparse polynomial optimization},
  author={Latorre, Fabian and Rolland, Paul and Cevher, Volkan},
  journal={International Conference on Learning Representations},
  year={2020}
}

@article{virmaux2018lipschitz,
  title={Lipschitz regularity of deep neural networks: analysis and efficient estimation},
  author={Virmaux, Aladin and Scaman, Kevin},
  journal={Advances in Neural Information Processing Systems},
  volume={31},
  year={2018}
}

@article{bartlett2018representing,
  title={Representing smooth functions as compositions of near-identity functions with implications for deep network optimization},
  author={Bartlett, Peter L and Evans, Steven N and Long, Philip M},
  journal={arXiv preprint arXiv:1804.05012},
  year={2018}
}

@article{shi2022efficiently,
  title={Efficiently computing local lipschitz constants of neural networks via bound propagation},
  author={Shi, Zhouxing and Wang, Yihan and Zhang, Huan and Kolter, J Zico and Hsieh, Cho-Jui},
  journal={Advances in Neural Information Processing Systems},
  volume={35},
  pages={2350--2364},
  year={2022}
}

@inproceedings{malherbe2017global,
  title={Global optimization of Lipschitz functions},
  author={Malherbe, C{\'e}dric and Vayatis, Nicolas},
  booktitle={International conference on machine learning},
  pages={2314--2323},
  year={2017},
  organization={PMLR}
}

@article{berkenkamp2017safe,
  title={Safe model-based reinforcement learning with stability guarantees},
  author={Berkenkamp, Felix and Turchetta, Matteo and Schoellig, Angela and Krause, Andreas},
  journal={Advances in neural information processing systems},
  volume={30},
  year={2017}
}

@inproceedings{liu2022learning,
  title={Learning smooth neural functions via lipschitz regularization},
  author={Liu, Hsueh-Ti Derek and Williams, Francis and Jacobson, Alec and Fidler, Sanja and Litany, Or},
  booktitle={ACM SIGGRAPH 2022 Conference Proceedings},
  pages={1--13},
  year={2022}
}

@article{gouk2021regularisation,
  title={Regularisation of neural networks by enforcing lipschitz continuity},
  author={Gouk, Henry and Frank, Eibe and Pfahringer, Bernhard and Cree, Michael J},
  journal={Machine Learning},
  volume={110},
  pages={393--416},
  year={2021},
  publisher={Springer}
}

@article{santurkar2018does,
  title={How does batch normalization help optimization?},
  author={Santurkar, Shibani and Tsipras, Dimitris and Ilyas, Andrew and Madry, Aleksander},
  journal={Advances in neural information processing systems},
  volume={31},
  year={2018}
}

@inproceedings{ahishali2022srl,
  title={SRL-SOA: Self-representation learning with sparse 1D-operational autoencoder for hyperspectral image band selection},
  author={Ahishali, Mete and Kiranyaz, Serkan and Ahmad, Iftikhar and Gabbouj, Moncef},
  booktitle={2022 IEEE International Conference on Image Processing (ICIP)},
  pages={2296--2300},
  year={2022},
  organization={IEEE}
}

@article{morales2021hyperspectral,
  title={Hyperspectral dimensionality reduction based on inter-band redundancy analysis and greedy spectral selection},
  author={Morales, Giorgio and Sheppard, John W and Logan, Riley D and Shaw, Joseph A},
  journal={Remote Sensing},
  volume={13},
  number={18},
  pages={3649},
  year={2021},
  publisher={MDPI}
}

@article{kingma2014adam,
  title={Adam: A method for stochastic optimization},
  author={Kingma, Diederik P and Ba, Jimmy},
  journal={arXiv preprint arXiv:1412.6980},
  year={2014}
}

@article{zbontar2018fastmri,
  title={fastMRI: An open dataset and benchmarks for accelerated MRI},
  author={Zbontar, Jure and Knoll, Florian and Sriram, Anuroop and Murrell, Tullie and Huang, Zhengnan and Muckley, Matthew J and Defazio, Aaron and Stern, Ruben and Johnson, Patricia and Bruno, Mary and others},
  journal={arXiv preprint arXiv:1811.08839},
  year={2018}
}

@article{mardani2018neural,
  title={Neural proximal gradient descent for compressive imaging},
  author={Mardani, Morteza and Sun, Qingyun and Donoho, David and Papyan, Vardan and Monajemi, Hatef and Vasanawala, Shreyas and Pauly, John},
  journal={Advances in Neural Information Processing Systems},
  volume={31},
  year={2018}
}

@inproceedings{sanchez2020scalable,
  title={Scalable learning-based sampling optimization for compressive dynamic MRI},
  author={Sanchez, Thomas and G{\"o}zc{\"u}, Baran and van Heeswijk, Ruud B and Eftekhari, Armin and Il{\i}cak, Efe and {\c{C}}ukur, Tolga and Cevher, Volkan},
  booktitle={ICASSP 2020-2020 IEEE International Conference on Acoustics, Speech and Signal Processing (ICASSP)},
  pages={8584--8588},
  year={2020},
  organization={IEEE}
}

@article{bahadir2020deep,
  title={Deep-learning-based optimization of the under-sampling pattern in MRI},
  author={Bahadir, Cagla D and Wang, Alan Q and Dalca, Adrian V and Sabuncu, Mert R},
  journal={IEEE Transactions on Computational Imaging},
  volume={6},
  pages={1139--1152},
  year={2020},
  publisher={IEEE}
}

@article{lustig2007sparse,
  title={Sparse MRI: The application of compressed sensing for rapid MR imaging},
  author={Lustig, Michael and Donoho, David and Pauly, John M},
  journal={Magnetic Resonance in Medicine: An Official Journal of the International Society for Magnetic Resonance in Medicine},
  volume={58},
  number={6},
  pages={1182--1195},
  year={2007},
  publisher={Wiley Online Library}
}

@inproceedings{ledig2017photo,
  title={Photo-realistic single image super-resolution using a generative adversarial network},
  author={Ledig, Christian and Theis, Lucas and Husz{\'a}r, Ferenc and Caballero, Jose and Cunningham, Andrew and Acosta, Alejandro and Aitken, Andrew and Tejani, Alykhan and Totz, Johannes and Wang, Zehan and others},
  booktitle={Proceedings of the IEEE conference on computer vision and pattern recognition},
  pages={4681--4690},
  year={2017}
}

@article{rangnekar2020aerorit,
  title={Aerorit: A new scene for hyperspectral image analysis},
  author={Rangnekar, Aneesh and Mokashi, Nilay and Ientilucci, Emmett J and Kanan, Christopher and Hoffman, Matthew J},
  journal={IEEE Transactions on Geoscience and Remote Sensing},
  volume={58},
  number={11},
  pages={8116--8124},
  year={2020},
  publisher={IEEE}
}

@article{becker2011nesta,
  title={NESTA: A fast and accurate first-order method for sparse recovery},
  author={Becker, Stephen and Bobin, J{\'e}r{\^o}me and Cand{\`e}s, Emmanuel J},
  journal={SIAM Journal on Imaging Sciences},
  volume={4},
  number={1},
  pages={1--39},
  year={2011},
  publisher={SIAM}
}

@inproceedings{isola2017image,
  title={Image-to-image translation with conditional adversarial networks},
  author={Isola, Phillip and Zhu, Jun-Yan and Zhou, Tinghui and Efros, Alexei A},
  booktitle={Proceedings of the IEEE conference on computer vision and pattern recognition},
  pages={1125--1134},
  year={2017}
}

@inproceedings{zhu2017unpaired,
  title={Unpaired image-to-image translation using cycle-consistent adversarial networks},
  author={Zhu, Jun-Yan and Park, Taesung and Isola, Phillip and Efros, Alexei A},
  booktitle={Proceedings of the IEEE international conference on computer vision},
  pages={2223--2232},
  year={2017}
}

@article{baraniuk2007lecture,
  title={A lecture on compressive sensing},
  author={Baraniuk, Richard},
  journal={IEEE Signal processing magazine},
  volume={24},
  number={4},
  year={2007}
}

@article{sun2019hyperspectral,
  title={Hyperspectral band selection: A review},
  author={Sun, Weiwei and Du, Qian},
  journal={IEEE Geoscience and Remote Sensing Magazine},
  volume={7},
  number={2},
  pages={118--139},
  year={2019},
  publisher={IEEE}
}

@article{herrmann2012fighting,
  title={Fighting the curse of dimensionality: Compressive sensing in exploration seismology},
  author={Herrmann, Felix J and Friedlander, Michael P and Yilmaz, Ozgur},
  journal={IEEE Signal Processing Magazine},
  volume={29},
  number={3},
  pages={88--100},
  year={2012},
  publisher={IEEE}
}

@article{ye2019compressed,
  title={Compressed sensing MRI: a review from signal processing perspective},
  author={Ye, Jong Chul},
  journal={BMC Biomedical Engineering},
  volume={1},
  number={1},
  pages={8},
  year={2019},
  publisher={Springer}
}

@article{lorintiu2015compressed,
  title={Compressed sensing reconstruction of 3D ultrasound data using dictionary learning and line-wise subsampling},
  author={Lorintiu, Oana and Liebgott, Herv{\'e} and Alessandrini, Martino and Bernard, Olivier and Friboulet, Denis},
  journal={IEEE transactions on medical imaging},
  volume={34},
  number={12},
  pages={2467--2477},
  year={2015},
  publisher={IEEE}
}

@article{chen2008prior,
  title={Prior image constrained compressed sensing (PICCS): a method to accurately reconstruct dynamic CT images from highly undersampled projection data sets},
  author={Chen, Guang-Hong and Tang, Jie and Leng, Shuai},
  journal={Medical physics},
  volume={35},
  number={2},
  pages={660--663},
  year={2008},
  publisher={Wiley Online Library}
}

@article{donoho2006compressed,
  title={Compressed sensing},
  author={Donoho, David L},
  journal={IEEE Transactions on information theory},
  volume={52},
  number={4},
  pages={1289--1306},
  year={2006},
  publisher={IEEE}
}

@book{eldar2012compressed,
  title={Compressed sensing: theory and applications},
  author={Eldar, Yonina C and Kutyniok, Gitta},
  year={2012},
  publisher={Cambridge university press}
}

@article{yu2009compressed,
  title={Compressed sensing based interior tomography},
  author={Yu, Hengyong and Wang, Ge},
  journal={Physics in medicine \& biology},
  volume={54},
  number={9},
  pages={2791},
  year={2009},
  publisher={IOP Publishing}
}

@article{han2016deep,
  title={Deep residual learning for compressed sensing CT reconstruction via persistent homology analysis},
  author={Han, Yo Seob and Yoo, Jaejun and Ye, Jong Chul},
  journal={arXiv preprint arXiv:1611.06391},
  year={2016}
}

@article{martin2014hyca,
  title={HYCA: A new technique for hyperspectral compressive sensing},
  author={Mart{\'\i}n, Gabriel and Bioucas-Dias, Jos{\'e} M and Plaza, Antonio},
  journal={IEEE Transactions on Geoscience and Remote Sensing},
  volume={53},
  number={5},
  pages={2819--2831},
  year={2014},
  publisher={IEEE}
}

@inproceedings{baraniuk2007compressive,
  title={Compressive radar imaging},
  author={Baraniuk, Richard and Steeghs, Philippe},
  booktitle={2007 IEEE radar conference},
  pages={128--133},
  year={2007},
  organization={IEEE}
}

@inproceedings{huijben2020deep,
  title={Deep probabilistic subsampling for task-adaptive compressed sensing},
  author={Huijben, Iris and Veeling, Bastiaan S and van Sloun, Ruud JG},
  booktitle={8th International Conference on Learning Representations, ICLR 2020},
  year={2020}
}

@inproceedings{weiss2020joint,
  title={Joint learning of cartesian under sampling andre construction for accelerated mri},
  author={Weiss, Tomer and Vedula, Sanketh and Senouf, Ortal and Michailovich, Oleg and Zibulevsky, Michael and Bronstein, Alex},
  booktitle={ICASSP 2020-2020 IEEE International Conference on Acoustics, Speech and Signal Processing (ICASSP)},
  pages={8653--8657},
  year={2020},
  organization={IEEE}
}

@article{mou2021deep,
  title={Deep reinforcement learning for band selection in hyperspectral image classification},
  author={Mou, Lichao and Saha, Sudipan and Hua, Yuansheng and Bovolo, Francesca and Bruzzone, Lorenzo and Zhu, Xiao Xiang},
  journal={IEEE Transactions on Geoscience and Remote Sensing},
  volume={60},
  pages={1--14},
  year={2021},
  publisher={IEEE}
}

@inproceedings{zhang2020extending,
  title={Extending LOUPE for K-space Under-sampling Pattern Optimization in Multi-coil MRI},
  author={Zhang, Jinwei and Zhang, Hang and Wang, Alan and Zhang, Qihao and Sabuncu, Mert and Spincemaille, Pascal and Nguyen, Thanh D and Wang, Yi},
  booktitle={Machine Learning for Medical Image Reconstruction: Third International Workshop, MLMIR 2020, Held in Conjunction with MICCAI 2020, Lima, Peru, October 8, 2020, Proceedings 3},
  pages={91--101},
  year={2020},
  organization={Springer}
}

@article{aggarwal2020j,
  title={J-MoDL: Joint model-based deep learning for optimized sampling and reconstruction},
  author={Aggarwal, Hemant Kumar and Jacob, Mathews},
  journal={IEEE journal of selected topics in signal processing},
  volume={14},
  number={6},
  pages={1151--1162},
  year={2020},
  publisher={IEEE}
}

@article{yang2025learning,
  title={Learning Task-Specific Sampling Strategy for Sparse-View CT Reconstruction},
  author={Yang, Liutao and Huang, Jiahao and Fang, Yingying and Aviles-Rivero, Angelica I and Sch{\"o}nlieb, Carola-Bibiane and Zhang, Daoqiang and Yang, Guang},
  journal={IEEE Transactions on Instrumentation and Measurement},
  year={2025},
  publisher={IEEE}
}

@article{sherry2020learning,
  title={Learning the sampling pattern for MRI},
  author={Sherry, Ferdia and Benning, Martin and De los Reyes, Juan Carlos and Graves, Martin J and Maierhofer, Georg and Williams, Guy and Sch{\"o}nlieb, Carola-Bibiane and Ehrhardt, Matthias J},
  journal={IEEE Transactions on Medical Imaging},
  volume={39},
  number={12},
  pages={4310--4321},
  year={2020},
  publisher={IEEE}
}

@article{shen2020learning,
  title={Learning to scan: A deep reinforcement learning approach for personalized scanning in CT imaging},
  author={Shen, Ziju and Wang, Yufei and Wu, Dufan and Yang, Xu and Dong, Bin},
  journal={arXiv preprint arXiv:2006.02420},
  year={2020}
}

@inproceedings{zhang2019reducing,
  title={Reducing uncertainty in undersampled MRI reconstruction with active acquisition},
  author={Zhang, Zizhao and Romero, Adriana and Muckley, Matthew J and Vincent, Pascal and Yang, Lin and Drozdzal, Michal},
  booktitle={Proceedings of the IEEE/CVF conference on computer vision and pattern recognition},
  pages={2049--2058},
  year={2019}
}

@article{jin2019self,
  title={Self-supervised deep active accelerated MRI},
  author={Jin, Kyong Hwan and Unser, Michael and Yi, Kwang Moo},
  journal={arXiv preprint arXiv:1901.04547},
  year={2019}
}

@inproceedings{pineda2020active,
  title={Active MR k-space sampling with reinforcement learning},
  author={Pineda, Luis and Basu, Sumana and Romero, Adriana and Calandra, Roberto and Drozdzal, Michal},
  booktitle={International Conference on Medical Image Computing and Computer-Assisted Intervention},
  pages={23--33},
  year={2020},
  organization={Springer}
}

@article{bakker2020experimental,
  title={Experimental design for MRI by greedy policy search},
  author={Bakker, Tim and van Hoof, Herke and Welling, Max},
  journal={Advances in Neural Information Processing Systems},
  volume={33},
  pages={18954--18966},
  year={2020}
}

@article{feng2020hardthresholding,
  title={Convolutional neural network based on bandwise-independent convolution and hard thresholding for hyperspectral band selection},
  author={Feng, Jie and Chen, Jiantong and Sun, Qigong and Shang,Ronghua and Cao, Xianghai and Zhang, Xiangrong},
  journal={IEEE Transactions on Cybernetics},
  volume={51},
  number={9},
  pages={4414 - 4428},
  year={2020},
  publisher={IEEE}
}

@article{bischl2023hyperparameter,
  title={Hyperparameter optimization: Foundations, algorithms, best practices, and open challenges},
  author={Bischl, Bernd and Binder, Martin and Lang, Michel and Pielok, Tobias and Richter, Jakob and Coors, Stefan and Thomas, Janek and Ullmann, Theresa and Becker, Marc and Boulesteix, Anne-Laure and others},
  journal={Wiley Interdisciplinary Reviews: Data Mining and Knowledge Discovery},
  volume={13},
  number={2},
  pages={e1484},
  year={2023},
  publisher={Wiley Online Library}
}

@article{bergstra2011algorithms,
  title={Algorithms for hyper-parameter optimization},
  author={Bergstra, James and Bardenet, R{\'e}mi and Bengio, Yoshua and K{\'e}gl, Bal{\'a}zs},
  journal={Advances in neural information processing systems},
  volume={24},
  year={2011}
}

@book{feurer2019hyperparameter,
  title={Hyperparameter optimization},
  author={Feurer, Matthias and Hutter, Frank},
  year={2019},
  publisher={Springer International Publishing}
}

@inproceedings{tian2025sampling,
  title={Sampling Innovation-Based Adaptive Compressive Sensing},
  author={Tian, Zhifu and Hu, Tao and Niu, Chaoyang and Wu, Di and Wang, Shu},
  booktitle={Proceedings of the Computer Vision and Pattern Recognition Conference},
  pages={2387--2397},
  year={2025}
}

@article{krizhevsky2009learning,
  title={Learning multiple layers of features from tiny images},
  author={Krizhevsky, Alex and Hinton, Geoffrey and others},
  year={2009},
  publisher={Toronto, ON, Canada}
}
\bibliographystyle{iclr2026_conference}

\appendix
\include{appendix}
\section*{Appendix}
\section{Proof}
\label{Apppendix_proof}
In this Appendix, we prove the Theorem 1 in Method. We provide a proof for the effect of DPS-top-1 sampling on the Lipschitzness of the loss. 

\textbf{Theorem 1.} Let \( f_1, f_2, \dots, f_k \) be task models at \(j^{th}\) iteration, where each function \(f_j\) has a corresponding Lipschitz constants \( L_j\) for \(j=1,2,\dots,k\). The In DPS-top-1, k task functions are required for each input sample \(x_k\), whereas DPS-top-k defines a single task function \(f_k\) over group of k samples. Lipschitzness of the loss function for each DPS-top-1 is as follows:
\[
\left\| \text{Loss}_{\text{DPS-top-1}}(x) \right\|^2 = \left\| f(x^*) - f(x) \right\|^2 \leq \prod_{r=1}^{k} L_r \left\| x^* - x \right\|^2, \quad f=f_k(f_{k-1}(\dots(f_1(x_1))\dots)) \quad  
\]
\[
\left\| Loss_{\text{DPS-top-$k$}} (x)\right\|^2 = \left\|f_k(x^*) - f_k(x)\right\|^2=\left\| f_k(x^*)-f_k(x_1,x_2,\dots,x_k) \right\|^2 \leq L_k \left\|x^* -x\right\|^2
\]
Here, $x^* \in \mathbb{R}^N$ is the fully sampled input signal, $x \in \mathbb{R}^N$ is the sampled input signal (unsampled signal = 0), and $x_r$ is the $r^{th}$ sample. If the Lipschitz constant of task model at each iteration \( L_j \geq 1\),
\[
L_k\leq \prod_{r=1}^{k}L_r  \implies \operatorname{sup}_x \left\| \text{Loss}_{\text{DPS-top-$k$}} (x)\right\|^2 \leq \operatorname{sup}_x \left\| \text{Loss}_{\text{DPS-top-1}} (x)\right\|^2 \
\]

\textit{Proof.} The Lipschitzness of the loss function for DPS-top-1 at first iteration is
\[
\left\|f(x^*) - f_1(x_1) \right\|^2 \leq L_1 \left\|x^* - x_1 \right\|^2
\]

The Lipschitzness of the loss function for DPS-top-1 at $j^{th}$ iteration is
\begin{align*}
\left\|f(x_j^*) - f_j(x_j) \right\|^2 
&\leq L_j \left\|x_j^* - x_j \right\|^2 \\
&= L_j \left\|f_{j-1}(x_j^*) - f_{j-1}(x_{j-1}) \right\|^2 \\
&\leq L_j \times  L_{j-1} \left\|x_{j-1}^* - x_{j-1} \right\|^2
\end{align*}

By the principle of mathematical induction,
\[
\left\| \text{Loss}_{\text{DPS-top-1}}(x) \right\|^2 = \left\| f(x^*) - f_k(x_k) \right\|^2 \leq \prod_{r=1}^{k} L_r \left\| x^* - x_1 \right\|^2
\]
The Lipschitz constant is $\prod_{r=1}^{k} L_r$ for $\text{Loss}_{\text{DPS-top-1}}(x)$. On the other hand, the Lipschitzness of the loss function for DPS-top-$k$ is
\[
\left\| Loss_{\text{DPS-top-$k$}} (x)\right\|^2 = \left\|f_k(x^*) - f_k(x)\right\|^2 \leq L_k \left\|x^* -x\right\|^2
\]
The Lipschitz constant of the loss function for DPS-top-$k$ is simply $L_k$. Therefore, if the Lipschitz constant of task model at each iteration \( L_j \geq 1\),
\[
L_k\leq \prod_{r=1}^{k}L_r  \implies \operatorname{sup}_x \left\| \text{Loss}_{\text{DPS-top-$k$}} (x)\right\|^2 \leq \operatorname{sup}_x \left\| \text{Loss}_{\text{DPS-top-1}} (x)\right\|^2 \
\]

\section{Model architecture and training details}
\label{Appendix_training_details}
We adopt the same architectures and training strategies used in A-DPS for MNIST classification and MRI reconstruction tasks \citep{van2021active}. All experiments are run on a single NVIDIA RTX 3090 GPU.

\subsection{MNIST classification}
\label{Appendix_training_details_MNIST}
The classification network $f_{\theta}(\cdot)$ consists of 5-layer MLP with 784, 256, 128, 128, and 10 nodes, respectively. Each layer was followed by a leaky ReLU activation with a negative slope of 0.2 and the last layer followed by softmax activation that output probabilities of class. A dropout rate of 30\% is also applied to the first three layers.

The sampling network $g(\cdot)$ consists of an LSTM (Long Short-Term Memory) with one hidden layer of size 128, followed by two MLPs with 256 and 784 nodes, respectively. A leaky ReLU activation with a negative slope of 0.2 and a dropout rate of 30\% are applied after the first MLP.



The categorical cross-entropy loss was minimized using the Adam optimizer \citep{kingma2014adam} with the learning rate of $2\times 10^{-4}$ ($\beta_1=0.9$, $\beta_2=0.999$, and $\epsilon = 10^{-7}$). In addition, the trainable logits, of size 784 (nn.Parameters()), are also updated with a learning rate of $2\times 10^{-3}$ for DPS and PGA-DPS. The optimization was run for 100 epochs with a batch size of 256.

\subsection{CIFAR-10 classification}
\label{Appendix_training_details_CIFAR}
The classification network $f_{\theta}(\cdot)$ consists of four convolutional layers with channel sizes of 32, 64, and 128, respectively, and a kernel size of 3 $\times$ 3. Each layer is followed by batch normalization, a ReLU activation function, and a 2 $\times$ 2 max pooling operation. The resulting 128-channel feature map is then flatten and fed into an MLP composed of Linear(2048,256)-ReLU-Dropout(0.5)-Linear(256,128)-ReLu-Linear(128,10).

The sampling network $g(\cdot)$ consists of an LSTM (Long Short-Term Memory) with one hidden layer of size 128, followed by two MLPs with 256 and 1024 nodes, respectively. A leaky ReLU activation with a negative slope of 0.2 and a dropout rate of 30\% are applied after the first MLP.

The categorical cross-entropy loss was minimized using the Adam optimizer with the learning rate of $2\times 10^{-4}$. In addition, the trainable logits, of size 1024 (nn.Parameters()), are also updated with a learning rate of $2\times 10^{-3}$ for DPS and PGA-DPS. The optimization was run for 50 epochs with a batch size of 128.

\subsection{MRI reconstruction}
\label{Appendix_training_details_MRI}
The proximal neural network $\mathcal{P}_{\psi}$ consists of four convolutional layers with channel size of 16,16,16, and 1, respectively, and a kernel size of 3 $\times$ 3. Each layer is followed by a ReLU activation function except for the final layer. For the image prior regularizer (${\psi}$), a single 3 $\times$ 3 convolutional layer is used. The proximal gradient method is unrolled for three iterations (k=3). 

The adversarial loss \citep{ledig2017photo}, combined with a mean square error (MSE), is used to reconstruct visually plausible MR images, by training a discriminator network that distinguishes the real and reconstructed images. In addition, the discriminator feature loss between the real and reconstructed images was employed. 

The sampling network $g^k(\cdot)$ consists of a sequence of layers: Conv2d(1,16), ReLU, Conv2d(16,32), ReLu, Conv2d(32,64), ReLU, Average Pooling, Flatten, LSTM(64), and a single-layer MLP(64,208). All convolutional layers have a kernel size of 3 $\times$ 3. The last MLP converts the output of the LSTM (encoded context) to the logits of size 208. 

The discriminator network is consists of a sequence of layers: Conv2d(1,64), LeakyReLU(0.2), Conv2d(64,64), LeakyReLU(0.2), Conv2d(64,64), LeakyReLU(0.2), Average Pooling, Dropout(0.4), a single-layer MLP (64,1), and Sigmoid.  All convolutional layers have a kernel size of 3 $\times$ 3 and a stride of 2.

The total loss is computed as a weighted sum of three losses: mean squared error (MSE), adversarial loss, and the discriminator feature loss, with respective weights of 1, $5\times 10^{-6}$, and $10^{-7}$.


The network is updated using the Adam optimizer \citep{kingma2014adam} with the learning rate of $2\times 10^{-4}$ ($\beta_1=0.9$, $\beta_2=0.999$, and $\epsilon=1e-7$). In addition, the trainable logits, of size 208 (nn.Parameters()), are also updated with a learning rate of $2\times 10^{-3}$ for DPS and PGA-DPS. The optimization was run for 10 epochs with a batch size of 2.

\subsection{Hyperspectral image segmentation}
\label{Appendix_training_details_HSI}
For the segmentation network $f_{\theta}(\cdot)$, we use a residual U-Net consisting of 6 ResNet blocks, as proposed in the pix2pix \citep{isola2017image,zhu2017unpaired}. The input channel is set to 51, corresponding to the total number of hyperspectral bands, and the output channel is set to 6, corresponding to the number of classes including the undefined class. The number of the filters in the last convolution layer is set to 64, and tow downsampling operations are applied.

The sampling network $g^k(\cdot)$ consists of a sequence of layers: Conv2d(6,32), BN, ReLU, Conv2d(32,64), BN, ReLu, Conv2d(64,128), BN, ReLU, Conv2d(128,256), BN, ReLU, Average Pooling, Flatten, LSTM(256), and a single-layer MLP(256,51), where BN denotes bach normalization. All convolutional layers have a kernel size of 3 $\times$ 3. The last MLP converts the output of the LSTM (encoded context) to the logits of size 51. 

In PGA-DPS, the proportions of prior (deterministic) sampling ($N$) and active sampling ($M$) were 80\% and 20\%, respectively, resulting in two group sampling iterations.

The network is updated using the Adam optimizer \citep{kingma2014adam} with the learning rate of $10^{-4}$ ($\beta_1=0.9$, $\beta_2=0.999$, and $\epsilon=1e-7$), while the learning rate of $2 \times10^{-4}$ is used for DPS and GSS. In addition, the trainable logits, of size 51 (nn.Parameters()), are also updated with a learning rate of 1 for DPS and PGA-DPS. The optimization was run for 60 epochs with a batch size of 100.

\section{Results}
\label{appendix_results}
\subsection{MNIST classification}
We evaluated multiple $Ps$ values under an $As$ of 20\% and multiple $As$ values under a $Ps$ of 60\%. Table \ref{tab:MNIST_supple} shows that PA-DPS outperforms the baselines in almost all configurations, although the optimal configuration shifts slightly across sampling ratios.

\renewcommand{\arraystretch}{1.1}
\begin{table}[h]
\centering
\caption{The classification accuracy on the MNIST test set (10,000 samples) for various subsampling ratios ($r$) using various configurations of PGA-DPS (\textit{Ps,As}). Each sampling strategy was repeated across six independent runs.}
\label{tab:MNIST_supple}
\begin{tabular}{cccccccccc}
\hline
      & \multicolumn{9}{c}{Sampling ratio: r (\%)}               \\
Sampling Model  & 1    & 2    & 3    & 4    & 5    & 6    & 7    & 8   & 100 \\ \hline
DPS   & 63.5 & 83.6 & 91.5 & 95.2 & 96.4 & 97.1 & 97.3 & 97.6 & 98.2\\
A-DPS & 64.6 & 85.2 & 92.2 & 95.3 & 96.1 & 96.6 & 97.1 & 97.2 & -\\
PGA-DPS (80,20) & 69.1 & 86.0 & 93.9 & 95.8 & 96.8 & 97.3 & 97.6 & 97.9 & -\\ 
PGA-DPS (60,20) & 68.8 & 86.9 & 93.9 & 95.8 & 96.7 & 97.2 & 97.5 & 97.7 & -\\ 
PGA-DPS (40,20) & 67.8 & 86.8 & 93.6 & 95.9 & 96.7 & 97.3 & 97.4 & 97.6 & -\\ 
PGA-DPS (20,20) & 66.8 & 85.9 & 93.2 & 95.6 & 96.4 & 97.0 & 97.2 & 97.5 & -\\ 
PGA-DPS (60,5) & 68.8 & 87.1 & 93.7 & 95.8 & 96.5 & 97.0 & 97.2 & 97.6 & -\\ 
PGA-DPS (60,10) & 68.8 & 87.1 & 93.8 & 95.7 & 96.7 & 97.2 & 97.4 & 97.7 & -\\ 
PGA-DPS (60,15) & 68.8 & 87.6 & 94.2 & 95.8 & 96.6 & 97.2 & 97.5 & 97.7 & -\\ 
PGA-DPS (60,25) & 68.8 & 87.0 & 93.8 & 95.8 & 96.7 & 97.2 & 97.5 & 97.7 & -\\ 
PGA-DPS (60,30) & 68.3 & 87.0 & 93.9 & 95.7 & 96.7 & 97.2 & 97.5 & 97.7 & -\\ 

\hline
\end{tabular}
\end{table}

\subsection{MRI reconstruction}
Figure \ref{fig3} shows example reconstruction from masks generated by PGA-DPS, A-DPS, DPS, and other sampling methods. Notably, PGA-DPS selects samples farthest from the DC, capturing fine-grained details. PGA-DPS exhibits the farthest average mask distance from the DC (11.8), versas 10.2 (A-DPS), 8.8 (DPS), 10.8 (LOUPE), 8.1 (Greedy Prox.), and 11.3 (VDS). Despite prioritizing high-frequency components, PGA-DPS also samples adequately near DC, demonstrating its superior reconstruction performance with higher SSIM values.  

\begin{figure}[h]
\begin{center}
\includegraphics[width=\textwidth]{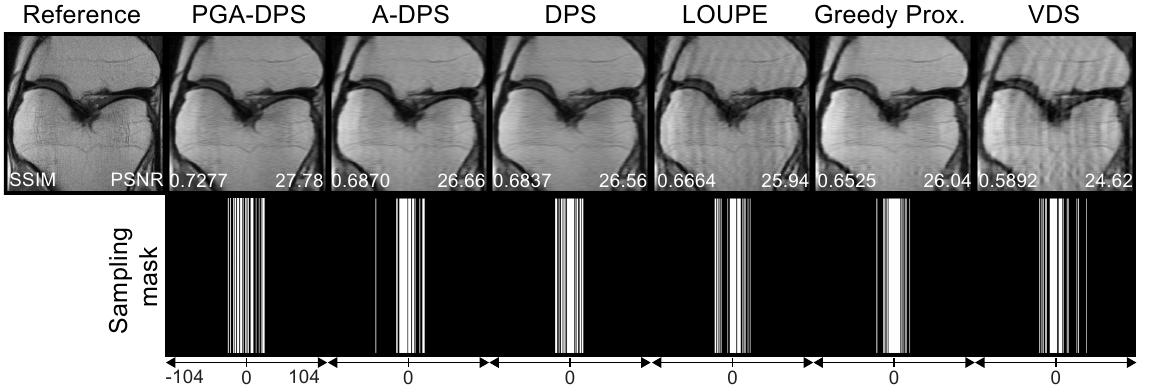}
\end{center}
\caption{Three examples of subsampling masks ($r=12.5\%$) and the corresponding reconstructed MR images obtained from PGA-DPS, A-DPS, and DPS. Results are compared against reference images reconstructed from fully sampled k-space data.}
\label{fig3}\end{figure}

To evaluate the robustness and general applicability of the proposed method, we tested whether the selected optimal configuration of (Ps, As) = (30\%, 30\%), identified from analysis with k-space lines M = 26, generalizes well across different sampling ratios by conducting experiments with varying values of M, i.e., k-space lines M = 15, 20, 30, and 35, corresponding to acceleration rates r = 7.2\%, 9.6\%, 14.4\%, and 16.8\%, respectively. As shown in Table \ref{tab:generalization}, PGA-DPS outperforms DPS and A-DPS across various numbers of k-space lines, with a particularly large margin when the number of k-space lines is small.

\begin{table}[h]
\centering
\caption{Average results over 5 runs on the hold-out test set of size $208 \times 208$ pixels for various acceleration factors ($r = M/N$, $N = 208$), compared to the DPS and A-DPS methods.}
\label{tab:generalization}
\begin{tabular}{llcccc}
\hline
Method & Metric & M = 15 & 20 & 30 & 35 \\
\hline
\multirow{3}{*}{DPS} 
  & NMSE & 0.0517 & 0.0448 & 0.0354 & 0.0325 \\
  & PSNR & 23.9   & 24.7   & 26.0   & 26.5 \\
  & SSIM & 0.476  & 0.528  & 0.623  & 0.660 \\
\hline
\multirow{3}{*}{A-DPS} 
  & NMSE & 0.0571 & 0.0497 & 0.0360 & 0.0333 \\
  & PSNR & 23.3   & 24.2   & 25.9   & 26.4 \\
  & SSIM & 0.444  & 0.513  & 0.613  & 0.647 \\
\hline
\multirow{3}{*}{\shortstack{PGA-DPS \\ (30\%, 30\%)}} 
  & NMSE & \textbf{0.0472} & \textbf{0.0402} & \textbf{0.0325} & \textbf{0.0304} \\
  & PSNR & \textbf{24.4}   & \textbf{25.3}   & \textbf{26.4}   & \textbf{26.9} \\
  & SSIM & \textbf{0.505}  & \textbf{0.568}  & \textbf{0.657}  & \textbf{0.690} \\
\hline
\end{tabular}
\end{table}

\subsection{Hyperspectral image segmentation}

Figure \ref{fig4} shows example segmentation maps from subsampled bands selected by All-bands, PGA-DPS, A-DPS, DPS, and other band selection methods. 

\begin{figure}[h]
\begin{center}
\includegraphics[width=\textwidth]{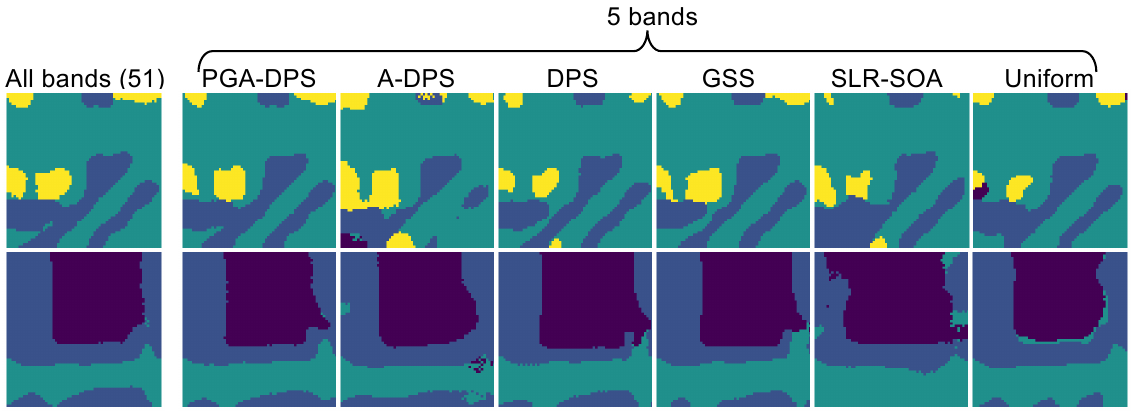}
\end{center}
\caption{Two example segmentation maps estimated using 5 bands selected from PGA-DPS, A-DPS, DPS, and other band selection methods ($r \approx 9.8\%$). Results are compared against reference segmentation map estimated using all 51 spectral bands.}
\label{fig4}\end{figure}

\section{Ablation study on temperature in DPS architecture}
\label{appendix_temperature}
We fixed the temperature to 2 during training for DPS, A-DPS, and PGA-DPS. However, tuning the relaxation parameter is critical for balancing the bias-variance trade-off of the gradient estimator and may lead to further improvements.

To assess this, we conducted additional MR reconstruction experiments using various fixed temperature values. As shown in Table \ref{table_R2_1}, the reconstruction performance was largely insensitive to temperature values. Nonetheless, a fixed temperature of 5 yielded the best overall results, suggesting that temperature optimization could provide marginal additional improvements.

\begin{table}[h!]
\centering
\caption{Average results over 5 runs on the hold-out test set (208 $\times$ 208 pixels) for an acceleration factor of 8 (M = 26, r = 12.5\%), evaluated across various fixed temperatures using PGA-DPS.}
\label{table_R2_1}
\begin{tabular}{c c c c c c c}
\toprule
 & $\tau$ = 0.5 & 1.0 & 2.0 & 3.0 & 5.0 & 10.0 \\
\midrule
NMSE & 0.0354 & 0.0354 & \underline{0.0350} & 0.0355 & \textbf{0.0349} & 0.0353 \\
PSNR & 26.00 & 26.00 & \underline{26.05} & 25.97 & \textbf{26.08 }& 26.02 \\
SSIM & 0.620 & 0.624 & \underline{0.625} & 0.619 & \textbf{0.627} & \underline{0.625} \\
\bottomrule
\end{tabular}
\end{table}

We also evaluated temperature annealing, defined as, $\tau=max(2.0,T\times e^{-\gamma n})$, where n denotes the training epoch. As shown in Table \ref{table_R2_2}, different annealing schedules produced noticeably different performance. Consequently, using a fixed temperature provides sufficiently robust and stable results in our setting, consistent with what has been found to work well in practice \citep{huijben2020deep,van2021active}. Nevertheless, a more comprehensive investigation of temperature annealing within DPS remains a promising direction for future work.

\begin{table}[h!]
\centering
\caption{Average results over 5 runs on the hold-out test set (208 $\times$ 208 pixels) for an acceleration factor of 8 (M = 26, r = 12.5\%), evaluated across various temperature annealing strategies ($\tau = \max(2.0, T \times e^{-\gamma n})$) using PGA-DPS.}
\label{table_R2_2}
\begin{tabular}{c c c c c c c}
\toprule
 & & $\gamma$ = 0.1 & 0.2 & 0.3 & 0.4 & 0.5 \\
\midrule
\multirow{3}{*}{T=5} 
& NMSE & 0.0378 & 0.0362 & 0.0381 & 0.0364 & 0.0367 \\
& PSNR & 25.7 & 25.9 & 25.6 & 25.9 & 25.8 \\
& SSIM & 0.582 & 0.606 & 0.584 & 0.604 & 0.599 \\
\midrule
\multirow{3}{*}{T=10} 
& NMSE & 0.0389 & \textbf{0.0353} & 0.0379 & 0.0361 & 0.0378 \\
& PSNR & 25.6 & \textbf{26.00} & 25.68 & 25.9 & 25.7 \\
& SSIM & 0.581 & \textbf{0.621} & 0.586 & 0.607 & 0.585 \\
\bottomrule
\end{tabular}
\end{table}

\end{document}